\documentclass[a4paper,UKenglish,cleveref, autoref, thm-restate]{lipics-v2019}

\usepackage{amsmath,amssymb,amsthm,mathtools,comment}
\usepackage{enumerate}
\usepackage{float}
\usepackage{graphicx}
\usepackage{complexity}
\bibliographystyle{plainurl}
\graphicspath{ {figures/} }
\usepackage{tikz}
\usetikzlibrary{automata,positioning, decorations.pathreplacing, arrows, decorations.markings}
\theoremstyle{definition}

\usepackage{microtype}


\bibliographystyle{plainurl}

\title{Efficient Isolation of Perfect Matching in $O(\log n)$ Genus Bipartite Graphs} 

\titlerunning{Efficient Isolation of Perfect Matching in $O(\log n)$ Genus Bipartite Graphs} 

\author{Chetan Gupta}{Indian Institute of Technology, Roorkee, India \url{} }{chetan.gupta@cs.iitr.ac.in}{}{Ministry of Electronics and IT, India, VVS PhD program}

\author{Vimal Raj Sharma}{Indian Institute of Technology, Jodhpur, India 
\url{} }{vimalraj@iitj.ac.in}{}{Ministry of Electronics and IT, India, VVS PhD program}

\author{Raghunath Tewari}{Indian Institute of Technology, Kanpur, India
\url{} }{rtewari@cse.iitk.ac.in}{}{Young Faculty Research Fellowship, Ministry of Electronics and IT, India and INSPIRE Fellowship, Department of Science and Technology, India}

\authorrunning{C. Gupta, V. R. Sharma and R. Tewari} 

\Copyright{Chetan Gupta, Vimal Raj Sharma and Raghunath Tewari} 
\nolinenumbers

\ccsdesc[100]{Theory of computation~Complexity classes}
\ccsdesc[100]{Theory of computation~Computational complexity and cryptography}

\keywords{Logspace computation, High genus, Matching isolation} 
\hideLIPIcs
\category{} 

\relatedversion{} 

\supplement{}





\DeclareMathOperator{\sign}{sign}
\DeclareMathOperator{\Area}{Area}

\newclass{\ReachUL}{ReachUL}
\newclass{\coUL}{coUL}
\newcommand{\wpl}{w_{\textrm{pl}}}
\newcommand{\wfks}{w_{\textrm{fks}}}

\newcommand{\wcomb}{w_{\textrm{comb}}}

\newcommand{\wside}{w_{\textrm{side}}}
\newcommand{\tail}{\textrm{tail}}
\newcommand{\head}{\textrm{head}}
\newcommand{\wund}{w^{\textrm{{\tiny{und}}}}}
\newcommand{\wcomund}{w_{\textrm{comb}}^{\textrm{\tiny{und}}}}
\newcommand{\gpl}{G_{\text{planar}}}
\newcommand{\EC}{E_\mathcal{C}}



\begin{document}

\maketitle

\begin{abstract}
We show that given an embedding of an $O(\log n)$ genus bipartite graph, one can construct an edge weight function in logarithmic space, with respect to which the minimum weight perfect matching in the graph is unique, if one exists. 

As a consequence, we obtain that deciding whether such a graph has a perfect matching or not is in $\SPL$. In 1999, Reinhardt, Allender and Zhou proved that if one can construct a polynomially bounded weight function for a graph in logspace such that it isolates a minimum weight perfect matching in the graph, then the perfect matching problem can be solved in $\SPL$. In this paper, we give a deterministic logspace construction of such a weight function for $O(\log n)$ genus bipartite graphs.
\end{abstract}

\newpage

\section{Introduction}
Given a graph $G(V,E)$, a \textit{perfect matching} is defined as a set of disjoint edges which covers all the vertices in the graph. The perfect matching problem asks whether a graph has a perfect matching or not. The first polynomial time sequential algorithm to solve this problem was given by Edmonds \cite{Edmonds65}. Since then, there has been a lot of effort to solve this problem efficiently in a parallel computation model. $\NC$ is a class of problem that can be solved efficiently in parallel computation model. Lov{\'{a}}sz gave the first randomized $\NC$ algorithm to solve the perfect matching problem \cite{L79}. However, the question whether the problem can be solved in  $\NC$ or not is still open. 

Mulmuley et al. made significant progress in answering this question and gave the famous \textit{isolating lemma} \cite{MVV87}. 

\begin{lemma}(Isolating Lemma \cite{MVV87})
For a set $S = \{x_1,x_2, \ldots x_n\}$, let $F$ be a family of subsets of $S$. If the elements in the set $S$ are assigned integer weights chosen uniformly and independently from the set $\{1,2,\ldots 2n\}$ then with probability greater than half there is a unique minimum weight set in $F$.
\end{lemma}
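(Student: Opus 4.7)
The plan is to upper-bound by $1/2$ the probability that the minimum-weight member of $F$ is not unique, and then take complements. The central concept is that of an \emph{ambiguous} element $x_i \in S$, namely one for which there exist two minimum-weight sets in $F$, one containing $x_i$ and the other not. If the minimum-weight set were not unique, then for any two distinct minimum-weight sets $A, B \in F$, every element $x_i \in A \triangle B$ is ambiguous; hence non-uniqueness forces the existence of an ambiguous element. It therefore suffices to show $\Pr[\exists\, i : x_i \text{ is ambiguous}] \leq 1/2$.

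For each fixed $i$, I would estimate $\Pr[x_i \text{ is ambiguous}]$ via the principle of deferred decisions. Fix the weights of all $x_j$ with $j \neq i$ arbitrarily. Under this conditioning, let $\alpha_i$ be the minimum of $w(T) - w(x_i)$ over all $T \in F$ containing $x_i$, and let $\beta_i$ be the minimum of $w(T)$ over all $T \in F$ not containing $x_i$ (ambiguity is vacuous if either family is empty). Then $x_i$ is ambiguous precisely when $w(x_i) + \alpha_i = \beta_i$, i.e., $w(x_i) = \beta_i - \alpha_i$. Since $\beta_i - \alpha_i$ is a fixed integer determined by the conditioning, and $w(x_i)$ is drawn uniformly from the $2n$-element set $\{1, \ldots, 2n\}$ independently of the other weights, this event has probability at most $1/(2n)$.

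Applying the union bound over $i \in [n]$ yields $\Pr[\exists\, i : x_i \text{ is ambiguous}] \leq n \cdot \frac{1}{2n} = \frac{1}{2}$, so with probability at least $1/2$ the minimum-weight set in $F$ is unique. The strict inequality in the lemma statement follows because the bound $1/(2n)$ on the per-element ambiguity probability is rarely tight (for instance, it is zero whenever $\beta_i - \alpha_i \notin \{1, \ldots, 2n\}$, which happens with positive probability over the other weights as soon as $F$ is non-trivial). The main conceptual step is identifying the right notion of ambiguity and reducing non-uniqueness to its occurrence; once this framing is in place, the deferred-decisions calculation and the union bound are routine.
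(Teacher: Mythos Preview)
The paper does not supply its own proof of this lemma; it is quoted from \cite{MVV87} as background, so there is no in-paper argument to compare against. Your write-up is precisely the classical Mulmuley--Vazirani--Vazirani argument (ambiguous elements, deferred decisions, union bound) and is correct for the bound $\Pr[\text{unique}] \geq 1/2$.

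One small remark: your justification for the \emph{strict} inequality is a bit loose. A cleaner way to get ``greater than half'' is to note that whenever non-uniqueness occurs there are at least two ambiguous elements (any $x_i$ in the symmetric difference of two minima), so the ambiguity events cannot be pairwise disjoint; hence the union bound is strict unless the non-uniqueness probability is zero, in which case the conclusion is immediate. Either way the headline argument stands.
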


Mulmuley et al. used this lemma to get a randomized $\NC$ algorithm for finding a perfect matching in graphs. They also showed that if one can construct an isolating weight function in $\NC$ (derandomizing the isolating lemma), then a perfect matching can be found in $\NC$. $\SPL$ is a class of problems reducible to computing determinant with the promise that the determinant is either 0 or 1. Allender et al. proved that if one can construct a perfect matching isolating weight function in logspace then the perfect matching problem can be solved in $\SPL$, which is a subset of $\NC^2$ \cite{ARZ99}. In a recent result, a quasi-polynomial ($O(\log^2n)$-bit) size isolating weight function was constructed for bipartite graphs which implies that the perfect matching problem can be solved in quasi-$\NC$ \cite{FGT16}. This result was subsequently extended to general graphs as well \cite{ST17}. However, constructing polynomially bounded isolating weight functions for general graphs has been elusive so far. Constructing isolating weight functions also has ramification in the \textit{directed graph reachability} problem. A logspace construction of a polynomially bounded path isolating weight function will imply that reachability problem in directed graphs can be solved in $\UL$, which will solve the $\NL$ vs. $\UL$ question, which has been open for a very long time\cite{RA00}. Also, a logspace construction of a polynomially bounded perfect matching isolating weight function even for bipartite graphs will prove that $ \NL \subseteq \SPL$ \cite{CSV84}.

Although constructing polynomially bounded isolating weight function seems to be hard for general graphs, such weight functions have been constructed for various subclasses of graphs such as planar graphs \cite{TV12}, bounded genus graphs \cite{DKTV11}, $K_{3,3}$ and $K_5$-free graphs \cite{AGGT16}, graph with small number of matchings \cite{GK87, AHT07} and graph with small number of nice cycles \cite{Hoang10}. The weight function constructed in \cite{DKTV11} is a $O(g \cdot \log n)$-bit weight function for $g$-genus graphs. Thus their result does not yield a polynomial size weight function for the graphs of genus more than constant. The question whether one can construct a polynomially bounded isolating weight function efficiently for graphs of genus beyond constant or not has been open since then. In this work, we settle this question by constructing a $O(g + \log n)$ bit isolating weight function for $g$-genus graphs. Thus our result gives a polynomial size isolating weight function for $O(\log n)$ genus bipartite graphs.


For a class of bipartite graphs, one way to obtain an isolating weight function is to construct a \textit{skew-symmetric} weight function for the same class of directed graphs such that every cycle in the graph gets a nonzero weight. This is the common technique in most of the above mentioned results. Having a skew-symmetric weight function such that it gives nonzero weights to every cycle in the graph, is sufficient for both path and matching isolation but is not necessary. Also, a weight function which isolates a path in the graph may not isolate a matching and vice-versa. That is why the weight functions constructed in \cite{KT16}, \cite{MP17} and \cite{GST19} are path isolating but do not isolate perfect matching. In this result, we construct a weight function which isolates a perfect matching in $g$-genus graphs even though it does not give nonzero weight to every cycle in the graphs.

\subsection{Our Result} 

In this paper, we extend the above line of work and prove the following theorem.
\begin{theorem}
\label{thm:main}
Given an undirected $O(\log n)$ genus bipartite graph along with its polygonal schema, the problem of deciding whether the graph has a perfect matching or not is in \SPL.
\end{theorem}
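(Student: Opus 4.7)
By the Reinhardt--Allender--Zhou theorem recalled earlier, to prove Theorem~\ref{thm:main} it suffices to construct, in deterministic logspace, an edge weight function $w \colon E(G) \to \mathbb{Z}$ of polynomial magnitude that isolates a minimum-weight perfect matching of $G$ whenever one exists. The plan is to build such a $w$ of bit length $O(g + \log n)$, which, since $g = O(\log n)$, is $O(\log n)$ bits and hence polynomially bounded.

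First, I would use the given polygonal schema to cut the embedding along its $4g$ boundary sides, producing a planar graph $\gpl$ embedded in a disk, in which each edge of $G$ crossing a schema side appears as a pair of half-edges attached to the two identified sides; computing this cut and the resulting planar embedding from the schema is routine in logspace. On $\gpl$ I would install a Tewari--Vinodchandran-style planar face-potential weight $\wpl$ of $O(\log n)$ bits; by the standard circulation argument, after orienting $G$ skew-symmetrically from one side of the bipartition to the other, $\wpl$ assigns nonzero circulation to every cycle of $G$ that is contractible on the surface. Separately, I would define a combinatorial weight $\wcomb$ of only $O(g)$ bits which attaches a small signed contribution to each edge crossing the $i$-th pair of identified boundary sides, so that the $\wcomb$-circulation of a cycle detects its homology class in $H_1$ of the surface. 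The final weight $w = \wpl + \wcomb$ is then logspace computable and has $O(g + \log n)$ bits.

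For isolation, I would take distinct perfect matchings $M, M'$ of $G$; their symmetric difference decomposes into even alternating cycles $C_1, \ldots, C_k$, and under the skew-symmetric orientation, $w(M) - w(M') = \sum_j \epsilon_j\, w(C_j)$ for signs $\epsilon_j \in \{\pm 1\}$. If all $C_j$ are contractible, the $\wcomb$ contributions vanish and the Tewari--Vinodchandran face-summation argument on $\gpl$ gives a nonzero $\wpl$ total. If some $C_j$ is non-contractible, I would argue that either (i) the aggregate homology crossing vector of $\sum_j \epsilon_j C_j$ is nonzero, in which case $\wcomb$ detects it, or (ii) the crossing vector vanishes, in which case a structural ``uncrossing'' lemma rewrites the collection as a cohomologous family of contractible alternating cycles inside $\gpl$, at which point $\wpl$ again witnesses isolation.

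The principal obstacle I foresee is the uncrossing lemma in case~(ii): proving that a collection of alternating cycles whose combined homology crossing vector is zero can, without leaving the bipartite graph, be replaced by a family of purely contractible alternating cycles on which $\wpl$ gives the same signed circulation sum. This is where the bipartite hypothesis becomes essential and where the saving from $O(g \log n)$ to $O(g)$ combinatorial bits (compared with \cite{DKTV11}) comes from: the bipartite parity of matching versus non-matching edges should force an exchange-style pairing of opposite-sign crossings at each identified boundary pair, enabling a re-routing through the interior of $\gpl$ by induction on the total number of schema crossings.
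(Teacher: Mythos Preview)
Your proposal has a genuine gap precisely where you flag it, and a structural mismatch with the paper besides. First, an $O(g)$-bit edge weight cannot inject integer homology into $\mathbb{Z}$: the signed crossing numbers across the $2g$ sides can each be as large as $n$, so separating all of $\mathbb{Z}^{2g}$ requires $O(g\log n)$ bits---exactly the DKTV bound you want to beat. At best your $O(g)$-bit weight detects the mod-$2$ signature, so case~(i) only separates matchings lying in different signature classes, and the entire within-class argument rests on the uncrossing lemma you leave unproven. Your sketch of that lemma (pair opposite-sign crossings, re-route through the interior of $\gpl$) does not obviously preserve the $\wpl$-circulation sum, nor does it produce cycles that are still \emph{alternating} with respect to the two matchings; bipartiteness alone does not force this.

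The paper handles the within-class case by a different mechanism and does \emph{not} end up with a single isolating weight. Instead of reducing to contractible cycles, it introduces an $O(\log n)$-bit weight $\wside$ that assigns $\pm i$ to the $i$-th schema-crossing edge in clockwise boundary order. The key lemma (Lemma~\ref{lem:nozerowtcy}) shows that any family of disjoint directed cycles with total signature $(0)^{2g}$ can be reoriented cycle-by-cycle so that the crossing edges alternate in/out along the boundary; a planarity/parity argument on the arcs between consecutive crossings (Lemma~\ref{lem:parity}) certifies that this reorientation is internally consistent on each cycle, and then $\sum_i \wside(\vec e_i)\neq 0$ by a telescoping inequality. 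Skew-symmetry then forces some original cycle to have $\wcomb=\wpl\cdot n^{10}+\wside$ nonzero. This yields at most one minimum-weight matching \emph{per signature class}, hence at most $2^{2g}$ minimum-weight matchings overall---not a unique one. The final isolation step, which your proposal omits entirely, is FKS hashing (Theorem~\ref{thm:fks}) over these $\le 2^{2g}$ survivors: for $g=O(\log n)$ one obtains polynomially many $O(\log n)$-bit weight functions, at least one of which isolates, and then \cite{ARZ99} gives $\SPL$.
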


Given a $g$-genus bipartite graph $G$ we construct $O(g+\log n)$-bit weight functions $w_1,w_2, \ldots w_k$, where $k = O(n^c+ 2^{g})$, such that there exists a unique minimum weight perfect matching in the $G$ with respect to some $w_i$, if $G$ has a perfect matching. To achieve this, we first construct a directed graph $\vec{G}$ which is same as $G$, but its edges are assigned direction as follows. Let $L$ and $R$ be the two sets of the bipartition  of $G$. We assign a direction to all the edges in $\vec{G}$ from $L$ to $R$. Then we divide the perfect matchings of $\vec{G}$ into different classes according to their \textit{signatures}. Signature of a matching represents the parity of number of its edges crossing each pair of sides of the \textit{polygonal schema} of $\vec{G}$ (defined in Section \ref{sec:prelims}). Matchings in one class are said to be topologically equivalent to each other in a sense. Polygonal schema of a $g$-genus graph consists of $2g$ pairs of sides. Therefore there are $2^{2g}$ many classes. We construct our isolating weight function in two steps. In the first step, we construct a weight function which is a linear combination of the weight function constructed in \cite{TV12} and another weight function defined later in this paper and show that there is at most one minimum weight perfect matching in each class with respect to this weight function. In the second step, we use the hashing scheme of Fredman, Koml\'{o}s and Szemer\'{e}di \cite{FKS84} to get $k$ many weight functions $w_1,w_2,\ldots ,w_k$ such that for some $i\leq k$, $w_i$ isolates a minimum weight perfect matching in $\vec{G}$. A matching in $\vec{G}$ corresponds to a unique matching ${G}$ and vice-versa. Therefore we get a unique minimum weight perfect matching in $G$ with respect to $w_i$.

For $g= O(\log n)$ we get $k= O(n^{c'})$, for some constant $c'>0$. That means we get polynomially many weight functions such that there is at most one minimum weight perfect matching in the graph with respect to at least one of the weight function. Then we apply the result of \cite{ARZ99} to get an $\SPL$ algorithm for perfect matching problem in $O(\log n)$ genus bipartite graphs.

\textbf{Comparison with the path isolating weight function for $O (\log n)$ genus graphs \cite{GST19}:} Note that the weight function constructed in \cite{GST19} is also a linear combination of two weight functions, one of which gives nonzero weights  to all surface separating cycles in the graph. Therefore, when we divide the paths between a pair of vertices into classes and take any two minimum weight non-intersecting paths with respect to this weight function from the same class, we know that the cycle formed by reversing one of the paths is surface separating. Since every surface separating cycle has nonzero weight, and the weight function is skew-symmetric, this implies that these paths can not be of equal weights. Which means there is at most one minimum weight path in each class with respect to that weight function. Similarly, we handle the case when the paths are intersecting. However, that same weight function does not work here in matching isolation. Here also we first divide the matchings into classes according to their signatures. Now if we consider two minimum weight perfect matchings within a class, all the cycles formed by taking their disjoint union can be surface non-separating. Since the weight of a surface non-separating cycle can be zero with respect to that weight function, this does not give any contradiction to the fact that there can be two minimum weight perfect matchings within a class. In this paper, we overcome this hurdle by constructing a new weight function which isolates a matching within a class. Then we isolate a matching across the classes by the technique mentioned above. 

\subsection{Organization of the Paper}
The rest of the paper is organized as follows. In Section \ref{sec:prelims}, we define the necessary notations and a suitable representation of high genus graphs which we use in this paper. In Section \ref{sec:wtfunc}, we define the first part of our weight function, which is a linear combination of two weight functions defined in that section. In Section \ref{sec:isolation}, we prove that the number of minimum weight perfect matchings with respect to this weight function is very small. Then we use the hashing scheme of \cite{FKS84} to obtain our final weight function, which isolates a minimum weight perfect matching in the graph.

\section{Preliminaries and Notations}
\label{sec:prelims}
A $g$-genus surface is a sphere with $g$-many handles on it. A $g$-genus graph is a graph which can be embedded on a $g$-genus surface without intersecting its edges. A $g$-genus surface can be represented by a polygon called \textit{polygonal schema}(see Figure \ref{fig:schema}). The polygonal schema of a $g$-genus surface has $4g$-sides $T_1,T_2,T_1',T_2', \ldots , T_{2g-1}',T_{2g}'$ identified in pairs. The sides $T_i$ and $T_i'$ form a pair together in the sense that an edge going into $T_i$ will come out of $T_i'$ and vice versa. An embedding of a graph $G$ on a $g$-genus surface can be represented by an embedding of $G$ inside this polygon. In such an embedding an edge $\{u,v\}$ of a graph $G$ is said to cross a side $S$ of the polygonal schema, if $u$ or $v$ is incident on the side $S$ (for example in Figure \ref{fig:schema}, the edge $\{a,c\}$ is crossing the sides $T_1$ and $T_1'$). We assume that we are given the combinatorial embedding of the graph $G$ inside this polygon together with the ordered set of edges crossing each side of the polygon. We also assume that no vertex of $G$ lies on the sides of the polygonal schema. Such an embedding is called the \textit{polygonal schema of the graph $G$}.

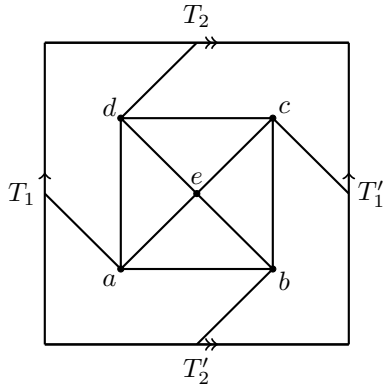
\begin{figure}
\begin{center}
\begin{tikzpicture}[]
\usetikzlibrary{decorations.markings}
\begin{scope}[very thick,decoration={
    markings,
    mark=at position 0.57 with {\arrow{>}}}
    ]
\draw[postaction={decorate}][draw, thick] (0,0) -- (4,0);
\draw[postaction={decorate}][draw, thick] (4,0) -- (4,4);
\draw[postaction={decorate}][draw, thick] (0,4) -- (4,4);
\draw[postaction={decorate}][draw, thick] (0,0) -- (0,4);
\end{scope}

\begin{scope}[very thick,decoration={
    markings,
    mark=at position 0.55 with {\arrow{>}}}
    ]
\draw[postaction={decorate}][draw, thick] (0,0) -- (4,0);
\draw[postaction={decorate}][draw, thick] (0,4) -- (4,4);
\end{scope}

\draw[][draw, thick] (1,1) -- (3,1);
\draw[][draw, thick] (3,1) -- (3,3);
\draw[][draw, thick] (3,3) -- (1,3);
\draw[][draw, thick] (1,3) -- (1,1);
\draw[][draw, thick] (1,1) -- (2,2);
\draw[][draw, thick] (3,1) -- (2,2);
\draw[][draw, thick] (3,3) -- (2,2);
\draw[][draw, thick] (1,3) -- (2,2);
\draw[][draw, thick] (1,1) -- (0,2);
\draw[][draw, thick] (3,3) -- (4,2);
\draw[][draw, thick] (3,1) -- (2,0);
\draw[][draw, thick] (1,3) -- (2,4);

\draw[fill] (2,2) circle [radius=0.04];
\draw[fill] (1,1) circle [radius=0.04];
\draw[fill] (1,3) circle [radius=0.04];
\draw[fill] (3,1) circle [radius=0.04];
\draw[fill] (3,3) circle [radius=0.04];

\node at (.85,.85) {$a$};
\node at (3.15,.85) {$b$};
\node at (3.15,3.15) {$c$};
\node at (.85,3.15) {$d$};
\node at (2,2.2) {$e$};

\node at (-.3,2) {$T_1$};
\node at (4.3,2) {$T_1'$};
\node at (2,4.35) {$T_2$};
\node at (2,-.35) {$T_2'$};
\end{tikzpicture} 
\caption{Polygonal schema of $K_5$, embedded on a surface of genus 1. Edges $\{a,c\}$ and $\{b,d\}$ are crossing the sides $T_1$ and $T_2$ respectively. Vertices $a$ and $c$ are said to be incident on the sides $T_1$ and $T_1'$ respectively.}
\label{fig:schema}
\end{center}
\end{figure}

In the polygonal schema of a graph $G$, the edges which do not cross any side of the polygonal schema, we call them \textit{planar edges}. Note that in the polygonal schema of a graph $G$, the subgraph induced by the planar edges of $G$, is a planar graph and we call this subgraph $\gpl$.

A \textit{piecewise straight-line embedding} of a planar graph is an embedding where all the vertices of the graph have integral coordinates and the edges are piecewise straight line segment connecting their two end points. Given a combinatorial embedding of a planar graph, a piecewise straight-line embedding of it can be constructed in logspace \cite{TV12}. Thus given a polygonal schema of a $g$-genus graph $G$, a piecewise straight-line embedding of $\gpl$ can be constructed in logspace. We will need such an embedding to construct our desirable weight function.

Given the polygonal schema of a $g$-genus graph $G$, we define the \textit{signature} of an edge $e$ in $G$, denoted as sign$(e)$, as a $2g$-bit binary string $b_1b_2 \ldots b_{2g}$, such that $b_i =1 $ if $e$ crosses $T_i$, otherwise $0$. Similarly, for any set of edges say $E = \{e_1,e_2, \ldots, e_k\}$, we define the signature of $E$ as, sign$(E)$  = sign$(e_1)$ $\oplus$ sign$(e_2) \oplus \ldots$ $\oplus$ sign$(e_k)$, where $\oplus$ represents the bitwise-XOR operator. Note that the $i$-th bit in the signature of a set $E$ represents the parity of the number of edges from that set, crossing the side $T_i$, i.e. if the number of edges in the set $E$, crossing the side $T_i$ are even then $i$-th bit in the sign$(E)$ will be $0$; otherwise it will be $1$.
Followings are the properties of signature that we will use later in this paper. For any set of edges $E_1,E_2,$ and $E_3$ of the graph $G$, we have
\begin{alphaenumerate}
\item \textit{commutativity}: sign$(E_1)$ $\oplus$ sign$(E_2)$ = sign$(E_2)$ $\oplus$ sign$(E_1)$,
\item \textit{associativity}: $\big($sign$(E_1)$ $\oplus$ sign$(E_2) \big)$ $\oplus$ sign$(E_3)$ = sign$(E_1)$ $\oplus$ $\big($sign$(E_2)$ $\oplus$ sign$(E_3)\big)$.
\end{alphaenumerate} 

 Without loss of generality assume that each edge crosses at most one pair of sides of the polygonal schema. If it crosses more than one pair of sides, we break it into multiple edges by inserting dummy vertices. To preserve matching, we always break an edge into an odd number of edges. Every term defined until now remains the same in case of directed graphs as well.

Since in this paper we work with both directed and undirected graphs, it is essential that we make a demarcation in the notation used for directed and undirected graphs. For a directed edge $\vec{e} = (u,v)$, the edge $e = \{u,v\}$ represents the underlying undirected edge and the edge $\vec{e}^{\,r}$ represents the directed edge $(v,u)$ that is the edge $\vec{e}$ with its direction reversed. Similarly, for any set of directed edges $\vec{E}$, set $E$ represents the set of underlying undirected edges of $\vec{E}$ and set $\vec{E}^{\,r}$ represents the set where each edge $\vec{e} \in \vec{E}$ is replaced with the edge $\vec{e}^{\,r}$.

In a directed graph $\vec{G}$, we call a set of edges $\vec{C}$, a \textit{directed cycle} if ($i$) edges of $C$ (underlying undirected edges of $\vec{C}$) form a simple cycle and, ($ii$) for every two adjacent edges of $\vec{C}$, tail of one edge is followed by the head of another edge. When we call $\vec{C}$ just a \textit{cycle} then $(ii)$ may not hold. Similarly, we can define a \textit{directed path} and \textit{path} in $\vec{G}$.

 $(0)^{k}$ represents the string $\overbrace{00\ldots0}^{k \text{-times}}$, where $k$ is an integer. For an integer $l>0$, $[l]$ denotes the set $\{1,2,\ldots, l\}$.

\section{Isolating Weight function}
\label{sec:wtfunc}

As discussed in the introduction, our main goal here is to construct a weight function for graphs efficiently. Let us first define the weight function formally. A weight function for a graph (directed or undirected) $G (V,{E})$ is a map $w: {E} \rightarrow Z$ which assigns an integer weight to every edge in the graph. For any set of edges $E'$ in the graph, the weight of the set $E'$ is defined as $w(E') = \sum_{e \in E'}w(e)$. A weight function $w$ for a graph $G$ is called \textit{min-isolating} if there exists at most one minimum weight perfect matching in $G$ with respect to the weight function $w$. 

In case of directed graphs, a weight function $w$ is called \textit{skew-symmetric} if $w(\vec{e}) = -w(\vec{e}^{\,r}),$ for all $\vec{e}\in \vec{E}$.

For a $g$-genus graph $\vec{G}$, we define a weight function $\wcomb$ which is a linear combination of the following two weight functions. 

\begin{itemize}
\item 
The first weight function we define is the same as the one defined  in \cite{TV12} for directed planar graphs. We call it $\wpl$. As we mentioned in Section \ref{sec:prelims}, we can construct a piecewise straight-line embedding of $\vec{G}_\text{planar}$ in logspace. In such an embedding an edge of the graph $\vec{G}_\text{planar}$ consists of constant many straight line segments. We assign weights to these line segments and the weight of an edge is defined as the sum of the weights of the line segments constituting that edge. Let $\vec{l}$ be a line segment such that $(x_1,y_1)$ and $(x_2,y_2)$ be the coordinates of its head and tail in such a piecewise straight-line embedding. Weight of $\vec{l}$ is defined as $\wpl(\vec{l}) = (y_2-y_1)(x_1+x_2)$
and weight of an edge $\vec{e}$ is defined as

\begin{eqnarray}
    \wpl (\vec{e})= 
\begin{cases}
    \sum_{\vec{l} \in \vec{e}} \wpl(\vec{l}),& \text{ if $\vec{e}$ is a planar edge,} \\
   0,& \text{ otherwise.}\\
\end{cases}
\end{eqnarray}

We state the following theorem regarding the weight function $\wpl$, which gives us a characterization of the weight of a directed cycle in a directed planar graph, established as Lemma $3$ in \cite{TV12}.\\
\begin{theorem} 
\label{thm:tv}
\cite{TV12}
Given a piecewise straight-line embedding of a planar graph $\vec{G}$, there exists a logspace computable weight function $\wpl$ such that for any directed cycle $\vec{C}$ in $\vec{G}$, we have $\wpl(\vec{C}) = 2 \cdot \Area (\vec{C})$ if $\vec{C}$ is a counter-clockwise cycle and $\wpl(\vec{C}) = - (2 \cdot \Area (\vec{C}))$ if $\vec{C}$ is a clockwise cycle, where $\Area(\vec{C})$ is the area of the region enclosed by $\vec{C}$.\\
\end{theorem}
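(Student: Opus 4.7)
My plan is to recognize $\wpl$ as a discrete incarnation of the trapezoid rule, which by Green's theorem computes (twice) the signed area enclosed by a closed planar curve. The proof then reduces to an application of the classical shoelace formula to the polygonal curve traced out by the directed cycle $\vec{C}$ in the given piecewise straight-line embedding of $\vec{G}$.

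First I would verify logspace computability of $\wpl$. The piecewise straight-line embedding with integer vertex coordinates is logspace-constructible by the cited result in Section~\ref{sec:prelims}; each edge is realized as $O(1)$ straight line segments, and each segment's weight $(y_2 - y_1)(x_1 + x_2)$ is a bounded arithmetic expression in the integer coordinates. Summing these contributions along an edge, and across the edges of a cycle, is then straightforward in logspace since all intermediate values fit in $O(\log n)$ bits whenever the coordinates do.

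Next I would establish the area identity. Let $\vec{C}$ be a directed cycle in $\vec{G}$; listing the endpoints of its constituent straight line segments in the order in which $\vec{C}$ traverses them gives a cyclic sequence $P_1, P_2, \ldots, P_m$, with $P_{m+1} = P_1$ and $P_j = (a_j, b_j)$. Because every edge of $\vec{C}$ lies in the planar subgraph, the definition of $\wpl$ yields
\[
\wpl(\vec{C}) = \sum_{j=1}^{m} (b_j - b_{j+1})(a_j + a_{j+1}).
\]
Expanding the product gives $\sum_j (a_j b_j - a_{j+1} b_{j+1}) + \sum_j (a_{j+1} b_j - a_j b_{j+1})$; the first sum telescopes to zero, leaving the shoelace expression $\sum_{j=1}^{m} (a_{j+1} b_j - a_j b_{j+1})$. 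Since $\vec{C}$ is a simple cycle, $P_1 P_2 \cdots P_m$ is a simple closed polygonal curve, and the shoelace formula pins this sum down to $\pm 2 \cdot \Area(\vec{C})$.

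The final step is to fix the sign. Because the bend points of each edge are visited in the same order as the directed traversal of $\vec{C}$, the polygonal curve inherits the orientation of $\vec{C}$, so the sign of the shoelace expression is determined by whether $\vec{C}$ is traversed counter-clockwise or clockwise with respect to the coordinate convention of the embedding. The main obstacle is purely bookkeeping rather than mathematical: one must check that the subdivision of every edge into straight-line segments, together with the insertion of intermediate points, is handled consistently so that the telescoping remains clean, and that the orientation convention of the logspace piecewise straight-line embedding aligns with the sign convention used in the statement. Neither amounts to a genuine technical difficulty, and the theorem follows.
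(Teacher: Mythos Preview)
The paper does not supply its own proof of this theorem; it merely states the result and attributes it to \cite{TV12} (``established as Lemma~3 in~\cite{TV12}''). Your argument via the shoelace formula is the standard derivation and is essentially what \cite{TV12} does: recognize $(y_2-y_1)(x_1+x_2)$ as the signed-area contribution of a trapezoid under the segment, sum around the closed polygonal curve, telescope the diagonal terms, and invoke the shoelace identity. Your treatment of logspace computability is also adequate.

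One minor remark: with the paper's convention that $(x_1,y_1)$ is the \emph{head} and $(x_2,y_2)$ the \emph{tail}, a direct computation along a counter-clockwise traversal yields $\sum_j (a_{j+1}b_j - a_j b_{j+1}) = -2\cdot\Area(\vec{C})$ under the usual shoelace sign convention, which is the opposite sign from the theorem as stated. You already flagged this as a bookkeeping issue, and that is exactly right---either the head/tail labeling or the coordinate orientation absorbs the sign, and in any case only the non-vanishing of $\wpl(\vec{C})$ is used downstream. So your proposal is correct and matches the intended argument.
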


\item We define another weight function $\wside$ as follows. Let $\sigma = (\vec{f_1}, \vec{f_2}, \ldots ,\vec{f_k})$ be the ordered set of edges crossing the sides of the polygonal schema $T_1$ to $T_{2g}$, ordered in a clockwise manner starting from the tail of $T_1$. 
 
\[
    \wside (\vec{f_i})= 
\begin{cases}
    i,& \text{if $\tail(\vec{f_i})$ is incident on some  side $T_j$ for $j \in [2g]$,} \\
   -i,& \text{if $\head(\vec{f_i})$ is incident on some  side $T_j$ for $j \in [2g]$.}
\end{cases}
\]
For all other edges $\vec{e}$, $\wside (\vec{e}) = 0$.


Our weight function $\wside$ is somewhat similar to the weight function defined in Theorem $8$ of \cite{DKTV11}. However, the main difference is that, in \cite{DKTV11}, they define $2g$ many weight functions (one for each pair of sides of the polygonal schema) similar to $\wside$ and their final weight function is a linear combination of those $2g$ weight functions, making it an $ O(g \cdot \log n)$-bit size weight function for $g$-genus graphs. Whereas in this paper $\wside$ is a single $O(\log n)$-bit weight function for a $g$-genus graph.

 Since each of these two weight functions are polynomially bounded and are computable in logspace, the overall computation remains in logspace as well. We combine these two weight functions into a single weight function and call it $\wcomb$, defined as follow:
\begin{eqnarray}
\wcomb = \wpl \cdot n^{10} +\wside.
\end{eqnarray}
\end{itemize}
Since for any two subsets of edges $\vec{E_1'}$ and $\vec{E_2'}$ of the graph, both weight functions $\wpl$ and $\wside$ are loosely bounded by $n^{10}$, hence $\wcomb(\vec{E_1'}) = \wcomb(\vec{E_2'})$ if and only if $\wpl(\vec{E_1'}) = \wpl(\vec{E_2'})$ and $\wside(\vec{E_1'}) = \wside(\vec{E_2'})$.\\

Note that in the perfect matching problem, we are given an undirected graph and asked to find if the graph has a perfect matching or not. However, we have defined the weight function $\wcomb$ for directed graphs. In order to give weights to an undirected bipartite graph $G$, we first obtain a directed graph $\vec{G}$ and construct a weight function for $\vec{G}$. Then we use that weight function to build a weight function for $G$.

Let $G$ be an undirected bipartite graph and $(L,R)$ be its bipartition. We construct a directed graph $\vec{G}$ as follows. For an edge $\{u,v\}$ in $G$ such that $u\in L$ and $v \in R$, we replace it with a directed edge $(u,v)$ in $\vec{G}$. We use Reingold's algorithm \cite{Reingold08} to find out whether a vertex  belongs to $L$ or $R$. Let $w$ be a weight function for $\vec{G}$. We define corresponding weight function $\wund$ for $G$ as follow. For an edge $\{u,v\}\in G$ such that $u \in L$ and $v \in R$,
\begin{equation}
  \wund \big( \{u,v\} \big)=w (u,v), \text{ where $(u,v)\in \vec{G}$}
\end{equation}
Note that if $\vec{M}$ is a matching of weight $t$ in $\vec{G}$ then $M$ will be a matching of weight $t$ in $G$. Thus, if $w$ is a min-isolating weight function for $\vec{G}$ then $\wund$ will be min-isolating for $G$. Also note that the construction of the graph $\vec{G}$ is the place where we use the bipartiteness of $G$ crucially. 
 
In the next section, we will construct a min-isolating weight function for directed $g$-genus bipartite graphs. Then ultimately we will use that weight function to obtain a min-isolating weight function for undirected $g$-genus bipartite graphs.

\section{Isolating a Minimum Weight Perfect Matching}
\label{sec:isolation}

Let $\vec{G}$ be a $g$-genus bipartite graph and $(L,R)$ be its bipartition. Let us assume that all the edges in $\vec{G}$ have direction from $L$ to $R$. We will prove that there are at most $2^{2g}$ minimum weight perfect matchings in $\vec{G}$ with respect to the weight function $\wcomb$, if $\vec{G}$ has a perfect matching. 

Let $\vec{M}$ be a perfect matching in $\vec{G}$. As we defined in Section \ref{sec:prelims}, the signature of $\vec{M}$ is,  
\begin{eqnarray*}
\text{sign$(\vec{M}) =$  sign$(\vec{e_1})$ $\oplus$ sign$(\vec{e_2})$ $\oplus \ldots \oplus $ sign$(\vec{e_j}),$ where $\vec{e_i} \in \vec{M}$ for all 	$i \in [j]$.} 
\end{eqnarray*}

Note that for a $g$-genus graph each matching has a $2g$-bit signature. Thus there are $2^{2g}$ many possible signatures. For each $0 \leq i \leq {2^{2g}-1}$, let bin($i$) represent the $2g$-bit binary number(with possible leading $0$'s) equivalent to an integer $i$. We define a class $A_i$ of perfect matchings in $\vec{G}$ with respect to the signature bin$(i)$ for all $0\leq i \leq {2^{2g}-1}$, as
\begin{eqnarray*}
A_i = \{ \vec{M} \mid \vec{M} \text{ is a perfect matching in $\vec{G}$ and sign$(\vec{M})=$ bin$(i)$}\}
\end{eqnarray*}

We will prove that there exists at most one minimum weight perfect matching in each class with respect to the weight function $\wcomb$.

\begin{lemma}
\label{lem:unimat}
For a $g$-genus bipartite graph $\vec{G}$, there exists at most one minimum weight perfect matching in the class $A_i$ with respect to the weight function $\wcomb$, for all $ 0 \leq i \leq 2^{2g}-1$.	
\end{lemma}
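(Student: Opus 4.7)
I will argue by contradiction: assume there exist two distinct minimum-weight perfect matchings $\vec{M_1}\neq\vec{M_2}$ in the same class $A_i$, and produce a third perfect matching in $A_i$ of strictly smaller $\wcomb$-weight. Starting from the underlying undirected symmetric difference $M_1 \triangle M_2$, bipartiteness of $G$ gives a decomposition into a vertex-disjoint union of alternating even cycles $C^{(1)},\ldots,C^{(k)}$. Because every edge of $\vec{G}$ is oriented from $L$ to $R$, reversing the $\vec{M_2}$-edges in each $C^{(j)}$ yields an honest directed cycle
\begin{align*}
\vec{C_{12}}^{(j)} \;=\; \bigl(\vec{C}^{(j)}\cap\vec{M_1}\bigr) \;\cup\; \bigl(\vec{C}^{(j)}\cap\vec{M_2}\bigr)^{r}.
\end{align*}
Since $\vec{M_1},\vec{M_2}\in A_i$, the signatures of these cycles XOR to zero; since $\wpl$ and $\wside$ are both skew-symmetric under edge reversal, the equality $\wcomb(\vec{M_1})=\wcomb(\vec{M_2})$ together with the $n^{10}$-separation in $\wcomb$ forces $\sum_j \wpl(\vec{C_{12}}^{(j)})=0$ and $\sum_j \wside(\vec{C_{12}}^{(j)})=0$.

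The next step is a partial-swap argument. For any subset $S\subseteq[k]$ with $\bigoplus_{j\in S}\sign(\vec{C_{12}}^{(j)})=0$, flipping the $\vec{M_1}$-edges in $\{C^{(j)}\}_{j\in S}$ to the corresponding $\vec{M_2}$-edges produces a perfect matching still in $A_i$, whose $\wcomb$-weight differs from $\wcomb(\vec{M_1})$ by $-\sum_{j\in S}\wcomb(\vec{C_{12}}^{(j)})$. Minimality of $\vec{M_1}$ forces this quantity to be non-positive; the complement $[k]\setminus S$ also has XOR-signature zero and gives the opposite sign, so equality $\sum_{j\in S}\wcomb(\vec{C_{12}}^{(j)})=0$ must hold for every signature-cancelling $S$. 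Specialising to singletons, every individual cycle $\vec{C_{12}}^{(j)}$ of signature $0$ satisfies $\wpl(\vec{C_{12}}^{(j)})=\wside(\vec{C_{12}}^{(j)})=0$.

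The main obstacle is to rule out such a ``weightless'' signature-$0$ cycle. If $\vec{C_{12}}^{(j)}$ lies entirely inside $\gpl$, Theorem \ref{thm:tv} immediately yields $\wpl(\vec{C_{12}}^{(j)})=\pm 2\,\Area(\vec{C_{12}}^{(j)})\neq 0$, a contradiction. For a signature-$0$ cycle that crosses some side-pair an even nonzero number of times, my plan is to pair up the crossings at each side-pair, close off the planar pieces of $\vec{C_{12}}^{(j)}$ into a piecewise-linear closed curve inside the polygon by inserting virtual boundary arcs, and apply the shoelace identity underlying Theorem~\ref{thm:tv} to identify $\wpl(\vec{C_{12}}^{(j)})$ with twice the signed area of the enclosed region; since the cycle is a simple non-degenerate curve, this area is non-zero. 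If the symmetric difference happens to contain no signature-$0$ cycle at all, then $k\ge 2$ and some non-zero signature is repeated; I would then pair cycles by matching signatures and use the clockwise linear ordering underlying $\wside$ to show that some cancelling pair has non-zero $\wside$-sum, contradicting the partial-swap constraint established in the previous paragraph. I expect the technical heart of the proof to be this boundary-closure step: choosing a pairing of crossings at each side-pair so that the virtual arcs can be inserted consistently, and verifying that their net contribution to the shoelace sum vanishes so $\wpl$ really measures the signed area enclosed by the closed curve.
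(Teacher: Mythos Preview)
Your setup and the partial-swap argument are fine, but two concrete gaps prevent the proof from closing. First, the boundary-closure plan for a signature-$0$ cycle $\vec{C}$ that crosses some side-pair does not work. By definition $\wpl$ vanishes on every non-planar edge, so $\wpl(\vec{C})$ is just the shoelace sum over the planar arcs of $\vec{C}$; the straight-line embedding of $\gpl$ is computed independently of the polygon's geometry, so there is no consistent way to insert virtual boundary arcs whose shoelace contribution cancels, and even if you could close the curve there is no reason its signed area is nonzero. Nothing prevents $\wpl(\vec{C})=0$ for such a cycle, and the paper never claims otherwise. Second, your case (B) asserts that if no cycle has signature $0$ then some nonzero signature must repeat; this is false already for $k=3$ with signatures $01,10,11$ in $\{0,1\}^{2}$. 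In that situation the only signature-cancelling subsets are $\emptyset$ and $[3]$, so your partial-swap constraints give nothing beyond $\sum_{j}\wcomb(\vec{C_{12}}^{(j)})=0$, and the pairing argument cannot even start.

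The paper sidesteps both issues. Reading the lemma as a statement about \emph{global} minimum-weight matchings that happen to lie in $A_i$ (which is all that is needed downstream), a single-cycle swap already forces $\wcomb(\vec{C_{12}}^{(j)})=0$ for \emph{every} $j$, with no signature restriction on the swap. The contradiction then comes entirely from $\wside$, not $\wpl$: list all crossing edges of the cycles in clockwise order along $T_1,\ldots,T_{2g}$, and reverse some of the cycles so that these edges alternately exit and enter (odd-indexed ones have their tail on a $T$-side, even-indexed ones their head); a parity argument on the planar arcs between consecutive crossings shows this orientation is consistent along each cycle. The $\wside$-contributions then pair up as $(\text{index of }e_{2i-1})-(\text{index of }e_{2i})<0$, so the total $\wside$ over all re-oriented cycles is strictly negative and hence some cycle has $\wcomb\neq 0$; since reversal only flips the sign of $\wcomb$, that cycle had $\wcomb\neq 0$ in the original orientation as well. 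The technical heart you anticipated is this alternation-and-consistency argument for $\wside$, not an area computation for $\wpl$.
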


For two matchings $\vec{M_1}$ and $\vec{M_2}$ in $\vec{G}$, we define
$\vec{E}_{\vec{M_1} \Delta \vec{M_2}} = (\vec{M_1} \cup \vec{M_2}) \setminus (\vec{M_1} \cap \vec{M_2})$. Let us first prove the following lemma about the characterization of the edges in the set $\vec{E}_{\vec{M_1} \Delta \vec{M_2}}$, when $\vec{M_1}$ and $\vec{M_2}$ are two perfect matchings from the same class.

\begin{lemma}
\label{lem:twomatch}
If $\vec{M_1}$ and $\vec{M_2}$ are the two perfect
matchings in the class $A_i$ then $\sign(\vec{E}_{\vec{M_1} \Delta \vec{M_2}})=(0)^{2g}$ that is, the edges in the set $\vec{E}_{\vec{M_1} \Delta \vec{M_2}}$ collectively cross each side of the polygonal schema an even number of times.
\end{lemma}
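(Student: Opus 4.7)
The plan is to reduce the lemma to a routine manipulation of the XOR operator on $2g$-bit strings, using the definition $\sign(E) = \bigoplus_{e \in E} \sign(e)$ together with the basic identity $x \oplus x = (0)^{2g}$.

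First, I would expand $\sign(\vec{M_1}) \oplus \sign(\vec{M_2})$ as
\[
\Bigl(\bigoplus_{\vec{e} \in \vec{M_1}} \sign(\vec{e})\Bigr) \oplus \Bigl(\bigoplus_{\vec{e} \in \vec{M_2}} \sign(\vec{e})\Bigr),
\]
and then use the commutativity and associativity of $\oplus$ (properties (a) and (b) from Section~\ref{sec:prelims}) to regroup the terms according to whether an edge lies in $\vec{M_1} \cap \vec{M_2}$ or in $\vec{E}_{\vec{M_1} \Delta \vec{M_2}}$. Every edge of $\vec{M_1} \cap \vec{M_2}$ contributes its signature exactly twice to this expression, while every edge of $\vec{E}_{\vec{M_1} \Delta \vec{M_2}}$ contributes exactly once.

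Next, since $\sign(\vec{e}) \oplus \sign(\vec{e}) = (0)^{2g}$, all the doubled contributions cancel, leaving
\[
\sign(\vec{M_1}) \oplus \sign(\vec{M_2}) = \bigoplus_{\vec{e} \in \vec{E}_{\vec{M_1} \Delta \vec{M_2}}} \sign(\vec{e}) = \sign(\vec{E}_{\vec{M_1} \Delta \vec{M_2}}).
\]
Because $\vec{M_1}, \vec{M_2} \in A_i$ we have $\sign(\vec{M_1}) = \sign(\vec{M_2}) = \bin(i)$, so the left-hand side equals $(0)^{2g}$, giving $\sign(\vec{E}_{\vec{M_1} \Delta \vec{M_2}}) = (0)^{2g}$. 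Finally, to obtain the stated reformulation, I would note that by the definition of signature, the $j$-th bit of $\sign(\vec{E}_{\vec{M_1} \Delta \vec{M_2}})$ records the parity of the number of edges of $\vec{E}_{\vec{M_1} \Delta \vec{M_2}}$ that cross $T_j$; the all-zero signature therefore says precisely that every side is crossed an even number of times.

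There is no real obstacle in this argument: it is essentially a bookkeeping exercise justifying that signature is additive (modulo~$2$) with respect to symmetric difference. The only thing to be careful about is applying $x \oplus x = (0)^{2g}$ edge by edge, which is legitimate exactly because the signature of a set is defined as the XOR (not the multiset sum) of individual edge signatures.
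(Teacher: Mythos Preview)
Your argument is correct and is essentially the same as the paper's: both expand $\sign(\vec{M_1}) \oplus \sign(\vec{M_2})$, cancel the doubled contribution of the common edges $\vec{M_1} \cap \vec{M_2}$ via $x \oplus x = (0)^{2g}$, and then use $\sign(\vec{M_1}) = \sign(\vec{M_2})$ to conclude. The only cosmetic difference is that you work edge-by-edge whereas the paper groups the edges into the two sets $\vec{E}_{\vec{M_1}\Delta\vec{M_2}} \setminus \vec{M_2}$ and $\vec{E}_{\vec{M_1}\Delta\vec{M_2}} \setminus \vec{M_1}$ before applying the same cancellation.
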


\begin{proof}
Since $\vec{M_1}$ and $\vec{M_2}$ are the matchings from the same class, we have
\begin{eqnarray*}
\sign(\vec{M_1}) & =& \sign(\vec{M_2})\\
\sign(\vec{M_1}) \oplus  \sign(\vec{M_2}) &=& (0)^{2g} \\ 
\Big(\sign(\vec{M_1} \cap \vec{M_2}) \oplus \sign(\vec{E}_{\vec{M_1} \Delta \vec{M_2}} \setminus \vec{M_2}) \Big ) \oplus \\ \Big (\sign(\vec{M_1} \cap \vec{M_2}) \oplus \sign(\vec{E}_{\vec{M_1} \Delta \vec{M_2}} \setminus \vec{M_1}) \Big ) & =& (0)^{2g}
\end{eqnarray*}

 We know that for any set of edges $\vec{S}$, sign$(\vec{S})$ $\oplus$ sign$(\vec{S}) = (0)^{2g}$; and from the properties of signature mentioned in Section \ref{sec:prelims}, we have 
\begin{eqnarray*}
\big(\sign(\vec{E}_{\vec{M_1} \Delta \vec{M_2}} \setminus \vec{M_2}) \big ) \oplus  \big (\sign(\vec{E}_{\vec{M_1} \Delta \vec{M_2}} \setminus \vec{M_1}) \big )  &=& (0)^{2g} \\
\sign(\vec{E}_{\vec{M_1} \Delta \vec{M_2}})  &=& (0)^{2g}.
\end{eqnarray*}	
\end{proof}

We will now show that there is at most one minimum weight perfect matching in each class. Assume that $\vec{M_1}$ and $\vec{M_2}$ are the two minimum weight perfect matchings in the class $A_i$ with respect to the weight function $\wcomb$. We know that the edges in the set $\vec{E}_{\vec{M_1} \Delta \vec{M_2}}$ form vertex disjoint cycles. Let $\vec{C_1},\vec{C_2}, \ldots ,\vec{C_k}$ be those cycles. Notice that all the edges in the cycle $\vec{C}_i$ are directed from $L$ to $R$ therefore $\vec{C}_i$ is not a directed cycle, for any $i$. Also, note that each $\vec{C_i}$ consists of even number of edges and contain alternating edges from $\vec{M_1}$ and $\vec{M_2}$. Hence we can claim the following.

\begin{claim}\label{eqcycle}
Let $\vec{E}_{1i}$ and $\vec{E}_{2i}$ be the set of edges of $\vec{M_1}$ and $\vec{M_2}$ respectively in $\vec{C_i}$ then $\wcomb(\vec{E}_{1i}) = \wcomb (\vec{E}_{2i})$, for all $ i \in [k]$.
\end{claim}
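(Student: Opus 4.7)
My plan is to use a cycle-swap exchange argument. For a fixed cycle $\vec{C_i}$ in the symmetric difference, define
\[
\vec{M_1}' := (\vec{M_1} \setminus \vec{E}_{1i}) \cup \vec{E}_{2i}, \qquad \vec{M_2}' := (\vec{M_2} \setminus \vec{E}_{2i}) \cup \vec{E}_{1i}.
\]
Since $\vec{C_i}$ has even length and its edges alternate between $\vec{M_1}$ and $\vec{M_2}$, both $\vec{M_1}'$ and $\vec{M_2}'$ are perfect matchings of $\vec{G}$, with weights $\wcomb(\vec{M_1}') = \wcomb(\vec{M_1}) + \wcomb(\vec{E}_{2i}) - \wcomb(\vec{E}_{1i})$ and $\wcomb(\vec{M_2}') = \wcomb(\vec{M_2}) - \wcomb(\vec{E}_{2i}) + \wcomb(\vec{E}_{1i})$.

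If both $\vec{M_1}'$ and $\vec{M_2}'$ lie in the class $A_i$, then by the minimality of $\vec{M_1}$ and $\vec{M_2}$ in $A_i$ we have $\wcomb(\vec{M_1}') \geq \wcomb(\vec{M_1})$ and $\wcomb(\vec{M_2}') \geq \wcomb(\vec{M_2})$; adding these two inequalities immediately yields $\wcomb(\vec{E}_{1i}) = \wcomb(\vec{E}_{2i})$, which is the statement of the claim. Since the signature of $\vec{M_1}'$ differs from that of $\vec{M_1}$ precisely by $\sign(\vec{C_i})$, this direct argument handles every cycle whose signature is $(0)^{2g}$.

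The hard part is the case $\sign(\vec{C_i}) \neq (0)^{2g}$, where the single-cycle swap moves $\vec{M_1}$ into a different class $A_{i \oplus \sign(\vec{C_i})}$ and minimality within $A_i$ no longer directly applies. Lemma~\ref{lem:twomatch} only guarantees that the XOR of the signatures of all cycles is $(0)^{2g}$, not each one individually. To address this, I would aggregate cycles into subsets $S \subseteq [k]$ with $\bigoplus_{j \in S} \sign(\vec{C_j}) = (0)^{2g}$; the simultaneous swap on such an $S$ keeps the matching inside $A_i$, and applying minimality to both $S$ and its complement (which by Lemma~\ref{lem:twomatch} also has combined signature $(0)^{2g}$) forces the aggregate equality $\sum_{j \in S}\big(\wcomb(\vec{E}_{2j}) - \wcomb(\vec{E}_{1j})\big) = 0$.

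To promote these aggregate equalities to per-cycle ones, I would pass to the directed cycles $\vec{D_j} := \vec{E}_{1j} \cup \vec{E}_{2j}^{\,r}$, on which $\wcomb(\vec{E}_{1j}) - \wcomb(\vec{E}_{2j}) = \wcomb(\vec{D_j})$ by the skew-symmetry of $\wpl$ and $\wside$; then split the equality using $\wcomb = n^{10}\wpl + \wside$ into separate equalities for the two components, and exploit Theorem~\ref{thm:tv} on the planar part together with the explicit index-based definition of $\wside$ to isolate the contribution of each individual $\vec{D_j}$. The central difficulty I expect is precisely this last refinement step, where one has to show that the cycle-by-cycle contributions to $\wcomb$ cannot silently cancel across the subset $S$ without each already being zero.
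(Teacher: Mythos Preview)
You have misread the standing hypothesis. In the paper, $\vec{M_1}$ and $\vec{M_2}$ are \emph{globally} minimum-weight perfect matchings of $\vec{G}$ that happen to lie in the same signature class, not merely minimum within that class. This is how Lemma~\ref{lem:unimat} is used afterwards (to conclude that $\vec{G}$ has at most $2^{2g}$ minimum-weight perfect matchings in total), and it is what makes the paper's proof of Claim~\ref{eqcycle} a three-line exchange argument: if $\wcomb(\vec{E}_{1j}) > \wcomb(\vec{E}_{2j})$ for some $j$, then $(\vec{M_1}\setminus\vec{E}_{1j})\cup\vec{E}_{2j}$ is a perfect matching of strictly smaller weight than $\vec{M_1}$, contradicting global minimality---regardless of which signature class the swapped matching lands in. There is no ``hard case.''

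Your detour through signature-neutral subsets $S\subseteq[k]$ and the internal structure of $\wpl$ and $\wside$ is therefore unnecessary for this claim, and on its own terms it is incomplete: you yourself flag the ``central difficulty'' of promoting aggregate equalities over $S$ to per-cycle equalities, and you do not resolve it. What you are sketching there is essentially the machinery of Lemma~\ref{lem:nozerowtcy}, which in the paper is invoked \emph{after} Claim~\ref{eqcycle} and relies on it, not the other way around. Worse, under the class-local reading you adopted, the per-cycle statement need not even hold: two cycles with nonzero but complementary signatures could carry equal-and-opposite weight imbalances without violating minimality within $A_i$, so the gap you identify is real and not just a missing detail.
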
 

\begin{proof}
Let us assume that there exists some $j \in [k] $ such that $\wcomb(\vec{E}_{1j})  \neq \wcomb (\vec{E}_{2j})$. Without loss of generality assume that $\wcomb(\vec{E}_{1j})  > \wcomb (\vec{E}_{2j})$. Now consider a new perfect matching $\big ((\vec{M_1} \setminus \vec{E}_{1j}) \cup \vec{E}_{2j} \big )$. This matching has strictly lesser weight than $\vec{M_1}$, which is a contradiction because we have assumed that $\vec{M_1}$ is a minimum weight perfect matching.
\end{proof}

Now consider another graph $\vec{G'}$ which is same as $\vec{G}$ but direction of the edges belonging to $\vec{M_2}$ is reversed in $\vec{G'}$. Let $\vec{M_1'}$ and $\vec{M_2'}$ be the matchings in $\vec{G'}$ corresponding to the matchings $\vec{M_1}$ and $\vec{M_2}$ in $\vec{G}$, i.e. underlying undirected edges of matchings $\vec{M_1}$ and $\vec{M_2}$ are same as that of matchings $\vec{M_1'}$ and $\vec{M_2'}$ respectively. We know that the edges in the set $\vec{E}_{\vec{M_1'} \Delta \vec{M_2'}}$ will form vertex disjoint cycles. Let $\vec{C_1'},\vec{C_2'},\ldots ,\vec{C_k'}$ be those cycles and $\vec{E}_{1i}'$ and $\vec{E}_{2i}'$ be the edges of matching $\vec{M_1'}$ and $\vec{M_2'}$ respectively, in the cycle $\vec{C_i'}$. By claim \ref{eqcycle} we know that
\begin{eqnarray*}
\wcomb(\vec{E}_{1i}) &=& \wcomb (\vec{E}_{2i}), \text{ for all }i \in [k].
\end{eqnarray*}
Also $\vec{E}_{1i} =\vec{E}_{1i}'$ and ${\vec{E}_{2i}} = \vec{E}_{2i}'^{\,r}$, therefore
\begin{eqnarray*}
\wcomb(\vec{E}_{1i}') &=& \wcomb (\vec{E}_{2i}'^{\,r}) \text{, for all }i \in [k].
\end{eqnarray*}
Since $\wcomb$ is skew-symmetric, we have
\begin{eqnarray}
\wcomb(\vec{E}_{1i}') &=& -\wcomb (\vec{E}_{2i}'),\nonumber \\
\wcomb(\vec{E}_{1i}') &+& \wcomb(\vec{E}_{2i}') = 0, \nonumber \\
\wcomb(\vec{C_i'})&=&0, \text{ for all }i \in [k].
\label{eqn:cyzerowt}	
\end{eqnarray}

Note that the edges in the set $\vec{E}_{1i}'$ have direction from $L$ to $R$ and the edges in set $\vec{E}_{2i}'$ have direction from $R$ to $L$ therefore the cycles $\vec{C_1'},\vec{C_2'},\ldots ,\vec{C_k'}$ are the directed cycles in $\vec{G'}$. We will now prove that $\wcomb(\vec{C_i'}) \neq 0$ for some $i \in [k]$, which will be a contradiction with Equation \ref{eqn:cyzerowt}. 

Since changing the direction of an edge does not change its signature, by Lemma \ref{lem:twomatch} we know that $\sign(\vec{C_1'}) \oplus \sign(\vec{C_2'}) \oplus \ldots \oplus \sign(\vec{C_k'}) = (0)^{2g}$.

\begin{lemma}
\label{lem:nozerowtcy}
Let $\vec{G'}$ be a $g$-genus graph which contains directed cycles $\{\vec{C_1'},\vec{C_2'}, \ldots ,\vec{C_k'} \}$ such that  $\sign{(\vec{C_1'})} \oplus \sign{(\vec{C_2'})} \oplus \ldots \oplus \sign{(\vec{C_k'})} = (0)^{2g}$. Then there exists $i\in [k]$, such that $\wcomb(\vec{C_i'}) \neq 0$.
\end{lemma}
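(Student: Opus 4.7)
The plan is to argue by contradiction. I assume $\wcomb(\vec{C_i'}) = 0$ for every $i \in [k]$, and derive a contradiction from the hypothesis $\sign(\vec{C_1'}) \oplus \sign(\vec{C_2'}) \oplus \cdots \oplus \sign(\vec{C_k'}) = (0)^{2g}$.

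First I would invoke the scale separation already observed after the definition of $\wcomb$: on any set of edges, both $|\wpl|$ and $|\wside|$ are polynomially bounded with bound far below $n^{10}$, so the equality $\wcomb = \wpl \cdot n^{10} + \wside = 0$ is equivalent to the conjunction $\wpl(\vec{C_i'}) = 0$ \emph{and} $\wside(\vec{C_i'}) = 0$ for each $i$. From here I would split on the signatures of the cycles.

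If every cycle $\vec{C_i'}$ has signature $(0)^{2g}$, each cycle crosses every pair of identified sides of the polygonal schema an even number of times. My plan is then to use this evenness to pair up the side-crossings within each cycle and connect the paired crossings by arcs along the sides of the (non-identified) polygon, producing a directed closed curve drawn entirely inside the polygon. Since the non-planar edges and the inserted boundary arcs all contribute $0$ to $\wpl$, the $\wpl$-weight of $\vec{C_i'}$ equals the $\wpl$-weight of this planar closed curve, which by Theorem~\ref{thm:tv} is $\pm 2 \cdot \Area$ of the enclosed region and hence strictly nonzero, contradicting $\wpl(\vec{C_i'}) = 0$. In the complementary case, some $\vec{C_j'}$ has nonzero signature; here I would use the head/tail sign convention in the definition of $\wside$ together with the linear ordering of the edges in $\sigma = (\vec{f_1}, \vec{f_2}, \ldots, \vec{f_\ell})$ to argue that $\wside(\vec{C_j'}) \neq 0$, again contradicting our assumption.

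The hard part will be making the two cases airtight. The signature-zero case requires verifying that the boundary-arc surgery genuinely produces an embedded directed closed curve with a well-defined enclosed area, and that this surgery does not alter the $\wpl$-weight; the area interpretation of Theorem~\ref{thm:tv} does not immediately survive such surgery and must be justified carefully. The signature-nonzero case is even more subtle: an odd parity of crossings on some pair of sides must be converted into a statement about the signed sum $\sum \pm i$ along the cycle, and the argument must use the specific way $\wside$ assigns $+i$ or $-i$ depending on whether the tail or head is incident on a side, since this head/tail asymmetry is what prevents the $\pm i$ contributions from canceling. I expect this is where most of the paper's technical effort goes.
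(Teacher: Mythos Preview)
Your plan diverges from the paper's proof in both the case split and the core technique, and your Case~(b) contains a genuine gap.

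The paper does not split on individual cycle signatures. Its two cases are simply: (i) no edge of any $\vec{C_i'}$ crosses a side of the schema, so every cycle is a planar directed cycle and Theorem~\ref{thm:tv} gives $\wpl(\vec{C_i'})\neq 0$ immediately, with no surgery; and (ii) at least one crossing edge exists, in which case the argument runs entirely through $\wside$, not $\wpl$. Your signature-zero case, where a cycle may still cross each side an even number of times, falls under the paper's case~(ii); the paper never performs the boundary-arc surgery you sketch.

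The real gap is your Case~(b). You want to conclude that a single cycle $\vec{C_j'}$ with nonzero signature must have $\wside(\vec{C_j'})\neq 0$, but the definition of $\wside$ does not force this for the \emph{given} orientation of $\vec{C_j'}$: the $\pm i$ contributions are determined by whether the head or the tail lands on a $T$-side, and for an arbitrary directed cycle these signed indices can cancel regardless of the parity of crossings on any particular side. The head/tail asymmetry you invoke is not, by itself, an obstruction to cancellation.

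The key idea you are missing is a \emph{global re-orientation}. The paper discards the given directions, lists the crossing edges $e_1,\ldots,e_{2l}$ of \emph{all} the cycles together in clockwise order along the $T$-sides, and re-orients $e_i$ so that its $T$-side endpoint is the tail when $i$ is odd and the head when $i$ is even. A non-crossing parity argument (using that the cycles are vertex-disjoint, so the planar arcs joining consecutive crossings cannot intersect) shows this choice extends consistently along each planar arc, yielding directed cycles $\vec{C_1''},\ldots,\vec{C_k''}$ with $\vec{C_i''}\in\{\vec{C_i'},\vec{C_i'^{\,r}}\}$. Under this re-orientation the $\wside$ contributions alternate in sign with strictly increasing absolute value, so $\sum_i \wside(\vec{C_i''})\neq 0$; hence some $\wside(\vec{C_i''})\neq 0$, and skew-symmetry gives $\wcomb(\vec{C_i'})=\pm\wcomb(\vec{C_i''})\neq 0$. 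Note that this is a global argument about the sum over all cycles, not the per-cycle argument you attempt, and the XOR-of-signatures hypothesis is what makes the total number of crossing edges even so that the alternating pairing goes through.
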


\begin{proof}
First consider the case, when no edge of the cycles $\{\vec{C_1'},\vec{C_2'}, \ldots ,\vec{C_k'} \}$  crosses any side of the polygonal schema. In that case each cycle $\vec{C_i'}$ is a planar cycle i.e. consists of only planar edges. By Theorem \ref{thm:tv} we know that $\wpl(\vec{C_i}) \neq 0$, which implies that $\wcomb(\vec{C_i}) \neq 0$ for all $i \in [k]$. Hence the lemma holds in this case. 

We will now prove the lemma for the case when some edges of the cycles $\{\vec{C_1'},\vec{C_2'}, \ldots ,\vec{C_k'} \}$ cross some sides of the polygonal schema. 


Let us consider a graph $G''$ such that edges of $G''$ are the underlying undirected edges of the cycles $(\vec{C_1'},\vec{C_2'}, \ldots , \vec{C_k'})$. Let $\mathcal{C} = ({C_1''},{C_2''}, \ldots ,{C_k''} )$ be the cycles in ${G''}$ corresponding to  cycles $(\vec{C_1'},\vec{C_2'}, \ldots , \vec{C_k'})$. We will construct another directed graph $\vec{G''}$ from $G''$(by assigning direction to the edges of $G''$) such that either $\vec{C_i''} = \vec{C_i'}$ or  $\vec{C_i''} = \vec{C_i'^r}$, for all $i\in[k]$. Let $\EC$ be the set of edges of the cycles in $\mathcal{C}$. We assign direction to the edges of $\EC$ in two steps. In the first step, we assign direction to only those edges of $\EC$ which are crossing some side of the polygonal schema. In the second step, we assign direction to the planar edges of $\EC$, based on the direction of the edges which were assigned direction in the first step. 

We know that all the cycles in $\mathcal{C}$ collectively cross each side of the polygonal schema an even number of times. Let $E =(e_1,e_2, \ldots e_{2l})$ for some integer $l>0$, be the edges in the set $\EC$, which cross some of the sides of the polygonal schema, indexed in clockwise order from $T_1$ to $T_{2g}$, starting from the tail of $T_1$. Without loss of generality assume that no two edges in $E$ share a vertex because if they do, we insert a dummy vertex in one of the edges to replace its end point so that our assumption holds. We will need this assumption to simplify our analysis.

\begin{itemize}
\setlength{\itemsep}{.2cm}
\item \textit{Step 1:} In this step, we assign direction to the edges in the set $E$. Let $e_i = \{u,v\}$ be an edge in $E$ such that $u$ and $v$ are incident on sides $T_j$ and $T_j'$ respectively, of the polygonal schema. We assign direction to $e_i \in E$ as follows:
\begin{itemize}
\item Assign direction to $e_i$ from $u$ to $v$, if $i$ is odd, i.e. assign direction to $e_i$ in such a way that $u$ becomes the tail of $\vec{e_i}$ and $v$ becomes the head of $\vec{e_i}$ in $\vec{G''}$.

\item Similarly, assign direction to $e_i$ from $v$ to $u$, if $i$ is even.\\
\end{itemize}

Before going to \textit{Step $2$}, let us make the following observations. Let $\vec{E} = (\vec{e_1},\vec{e_2} \ldots \vec{e_{2l}})$ be the edges in $\vec{G''}$ corresponding to edges in $E$ after \textit{Step 1}. Let $X = \{v_1,v_2, \ldots v_{4l}\}$ be the vertices of the edges of $\vec{E}$ ordered in a clockwise manner, according to their incidence on the side of the polygonal schema, starting from the tail of $T_1$(see Figure \ref{fig:ord}). Note that,

\begin{figure}
\begin{center}
\begin{tikzpicture}[]
\usetikzlibrary{decorations.markings}
\begin{scope}[very thick,decoration={
    markings,
    mark=at position 0.57 with {\arrow{<}}}
    ]

\draw[postaction={decorate}][draw, thick] (4,4) -- (4,0);

\draw[postaction={decorate}][draw, thick] (0,4) -- (0,0);
\end{scope}

\begin{scope}[very thick,decoration={
    markings,
    mark=at position 0.55 with {\arrow{>>}}}
    ]
\draw[postaction={decorate}][draw, thick] (0,0) -- (4,0);
\draw[postaction={decorate}][draw, thick] (0,4) -- (4,4);
\end{scope}


 \draw[<-] (0,1.5) to [] (1,1.5);
 \draw[] (1,1.5) to [] (1,2.5);
 \draw[<-] (1,2.5) to [] (0,2.5);
 \draw[->] (1.5, 3.5) to [] (1.5,4);
 \draw[] (2,3) to [] (1.5, 3.5);
 \draw[] (2.5,3.5) to [] (2,3);
 \draw[->] (2.5,4) to [] (2.5,3.5);
 
 \draw[->] (1.5,0) to [] (1.5,1);
 \draw[<-] (2.5,0) to [] (2.5,1);
 
 \draw[->] (3,2.5) to [] (4,2.5);
 \draw[<-] (3,1.5) to [] (4,1.5);

 \draw[] (2.5,1) to [] (3,1.5);
 \draw[] (1.5,1) to [] (2,2);
  \draw[] (2,2) to [] (3,2.5);
\draw[fill] (1,1.5) circle [radius=0.02];
\draw[fill] (1,2.5) circle [radius=0.02];
\draw[fill] (1.5,3.5) circle [radius=0.02];
\draw[fill] (2,3) circle [radius=0.02];
\draw[fill] (2.5,3.5) circle [radius=0.02];
\draw[fill] (1.5,1) circle [radius=0.02];
\draw[fill] (2.5,1) circle [radius=0.02];
\draw[fill] (3,1.5) circle [radius=0.02];
\draw[fill] (3,2.5) circle [radius=0.02];
\draw[fill] (2,2) circle [radius=0.02];
\node at (1,1.25) {$v_1$};
\node at (1.2,2.3) {$v_2$};
\node at (1.3,3.5) {$v_3$};
\node at (2.3,3.5) {$v_4$};
\node at (2.7,2.5) {$v_5$};
\node at (2.7,1.5) {$v_6$};
\node at (2.7,.8) {$v_7$};
\node at (1.7,.8) {$v_8$};

\node at (-.2,2) {$T_1$};
\node at (2,4.2) {$T_2$};
\node at (4.2,2) {$T_1'$};
\node at (2,-.22) {$T_2'$};
\node at (2,2.2) {$a$};

\end{tikzpicture} 
\caption{$(v_1,v_2,v_3,v_4,v_5,v_6,v_7,v_8)$ are the vertices of the edges which are crossing sides of the polygonal schema. Path $v_8av_5$ is a planar path. }
\label{fig:ord}
\end{center}
\end{figure}
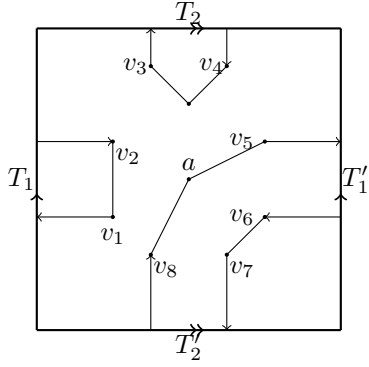

\begin{eqnarray}
\label{edge:prop}
\text{$\vec{e_i} = (v_{d_1},v_{d_2})$, where $d_1$ is odd and $d_2$ is even, for all $i \in [2l]$. }
\end{eqnarray}

We define a function $ \tau: X \rightarrow X $. $ \tau(v_i) = v_j$ if there is a simple path $P$ from $v_i$ to $v_j$ which consists of only planar edges of $\EC$, for $i,j \in[4l]$. We call such paths as \textit{planar paths} (see Figure \ref{fig:ord}). Since vertices in $X$ are the part of simple cycles, the function $\mathcal{\tau}$ is a bijective function.

\begin{lemma}
\label{lem:parity}
If $\tau(v_i) = v_j$, then $|i -j|$ is odd.
\end{lemma}
\begin{proof}
Assume that both $i$ and $j$ are odd. Without loss of generality assume that $j>i$. This implies that there are an odd number of vertices in the set $X$, between $v_i$ and $v_j$ namely, $X' = (v_{i+1}, v_{i+2}, \ldots v_{j-1})$. Note that vertices in $X'$ are part of non-intersecting simple cycles therefore they must be connected to each other through simple planar paths. Since $\tau$ is a bijective function we know that there is some vertex $v' \in X'$ such that $\tau(v') = v_t$ where $t \in [4l]$ and,  $t > j$ or $t<i$. This is not possible because it will imply that planar paths say from $v'$ to $v_t$ and from $v_i$ to $v_j$ say $P_1$ and $P_2$ respectively, must intersect each other. This is a contradiction since $P_1$ and $P_2$ are the parts of non-intersecting cycles. 
\end{proof}

\begin{lemma}
Let $P$ be a planar path between vertices $v_i$ and $v_j, i,j \in[4l]$. If $v_i$ is the head of some edge then $v_j$ will be the tail of some edge, in $\vec{E}$ and vice versa.
\end{lemma}
\begin{proof}
Let $v_i$ and $v_j$ both the vertices are the heads of the edges $e_{c_1}$ and $e_{c_2}$, where $c_1,c_2 \in  [k]$. We know that if $i$ is even then $j$ is odd and if $i$ is odd then $j$ is even. Without loss of generality assume that $i$ is even and $j$ is odd. However, from Equation \ref{edge:prop} we know that $j$ must be even. Hence we get a contradiction to Lemma \ref{lem:parity}. 

Similarly, we can handle the case when $v_i$ and $v_j$ are the tail of some edges.
\end{proof}

\item \textit{Step $2$:} Now we will assign the direction to the planar edges of $\vec{G''}$. This step is  straightforward. Take a planar path $P$ of $\vec{G''}$. Let $v'$ and $v''$ be its end vertices such that $v'$ is the head of an edge $\vec{e'}$ and $v''$ is the tail of some edge $\vec{e''}$, where $\vec{e'},\vec{e''}  \in \vec{E}$. Assign direction to all the edges in $P$ in such a way that the path $\vec{P'} = \vec{e'} \vec{P} \vec{e''} $ becomes a directed path in $\vec{G''}$.
\end{itemize}

Let $\vec{C_1''},\vec{C_2''}, \ldots , \vec{C_k''}$ be the cycles in $\vec{G''}$ after assigning direction to the underlying undirected cycles ${C_1''},{C_2''}, \ldots , {C_k''}$. After assigning direction using the above procedure, we can ensure that no two adjacent edges in the cycle $\vec{C_i''}$ for all $i\in [k]$ get opposite direction i.e. if $\vec{e}$ and $\vec{e'}$ are two adjacent edges in the cycle $\vec{C_i''}$ then the tail of $e$ will be followed by the head of $\vec{e'}$ or vice-versa (because of \textit{Step $2$}). This implies that $\vec{C_1''},\vec{C_2''}, \ldots , \vec{C_k''}$ are the directed cycles in $\vec{G''}$. Note that the way we have defined weight function $\wside$, we know that 
\begin{eqnarray*}
&& \wside(\vec{e}_i) < -\big(\wside(\vec{e}_{i+1})\big), \text{ for all odd $i \in [2l-1]$}\\
\Longrightarrow && \wside(\vec{e}_1)+\wside(\vec{e}_3)+ \ldots +\wside(\vec{e}_{2l-1}) < -\big(\wside(\vec{e}_2) +\wside(\vec{e}_4)+ \ldots +\wside(\vec{e}_{2l}) \big)\\
\Longrightarrow && \wside(\vec{e}_1)+\wside(\vec{e}_3)+ \ldots +\wside(\vec{e}_{2l-1})+ \wside(\vec{e}_2) +\wside(\vec{e}_4)+ \ldots +\wside(\vec{e}_{2l}) \neq 0.
\end{eqnarray*}
\\
Since for all planar edges $\vec{e}$, $\wside(\vec{e}) = 0$,
\begin{eqnarray*}
\sum_{i=1}^k \wside(\vec{C_i''}) \neq 0.
\end{eqnarray*}\\
Thus there exist some $i \in [k]$ such that
\begin{eqnarray}
\label{eqn:nonzero}
\wside(\vec{C_i''}) \neq 0  \Longrightarrow \wcomb(\vec{C_i''}) \neq 0,
\end{eqnarray}

Note that $\vec{C_i'}$ and $\vec{C_i''}$ for all $i \in [k]$, are the directed cycles such that their underlying undirected cycle is same. In a directed cycle there are only two directions possible. Therefore, we can say that
\begin{eqnarray}
\label{eqn:samecy}
&& \vec{C_i'} = \vec{C_i''} \text { or } \vec{C_i'} = \vec{C_i''^{\,r}},\nonumber\\
\Rightarrow &&\wcomb(\vec{C_i'}) = \wcomb(\vec{C_i''}) \text { or } \wcomb(\vec{C_i'}) = \wcomb(\vec{C_i''^{\,r}}), \nonumber \\
\Rightarrow && \wcomb(\vec{C_i'}) =   \wcomb(\vec{C_i''}) \text{ or } \wcomb(\vec{C_i'})= -\wcomb(\vec{C_i''}) \text{, for all $i \in [k],$ since} \\
&&\text{ $\wcomb$ is skew-symmetric.}\nonumber 
\end{eqnarray}
 
From Equation \ref{eqn:nonzero} and \ref{eqn:samecy} we can conclude that there exists some $ i \in [k]$ such that $\wcomb(\vec{C_i'}) \neq 0$. 
\end{proof}

This gives a contradiction with Equation \ref{eqn:cyzerowt}. Therefore we can conclude that there cannot exist two minimum weight perfect matchings in $\vec{G}$ within a class $A_i$ for all $ 0 \leq i \leq  2^{2g}-1$. This finishes the proof of Lemma \ref{lem:unimat}.

Note that we have proved that there is at most one minimum weight perfect matching in each class and there are total $2^{2g}$ many classes. Therefore, we can say that there are at most $2^{2g}$ minimum weight matchings in $\vec{G}$ with respect to the weight function $\wcomb$. As we mentioned in Section \ref{sec:wtfunc} that given a weight function $\wcomb$ for a directed bipartite graph $\vec{G}$ such that edges of $\vec{G}$ are directed from $L$ to $R$, we can get a weight function $\wcomund$ for underlying undirected graph $G$ such that if $\vec{M}$ is a matching of weight $t$ in $\vec{G}$ then $M$ will be a matching of weight $t$ in $G$.

\begin{lemma}
Given a $g$-genus graph $G$ along with its polygonal schema we can construct a weight function $\wcomund$ for $G$ in logspace such that there are at most $2^{2g}$ minimum weight perfect matchings in $G$ with respect to $\wcomund$.

\end{lemma}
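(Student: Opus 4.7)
The plan is to assemble the pieces already constructed in Sections \ref{sec:wtfunc} and \ref{sec:isolation}. Starting from the undirected bipartite graph $G$ with bipartition $(L,R)$, I would first orient every edge from its endpoint in $L$ to its endpoint in $R$ to obtain the directed graph $\vec{G}$. The bipartition is computable in logspace via Reingold's undirected reachability algorithm \cite{Reingold08}, so this orientation step fits in our resource bound.

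Next, I would construct the weight function $\wcomb$ on $\vec{G}$. It decomposes into two pieces, both logspace computable and polynomially bounded: the piecewise straight-line embedding of $\gpl$ yields $\wpl$ (logspace by \cite{TV12}), and the clockwise ordering of the edges that cross the sides of the polygonal schema---which is part of the input---yields $\wside$. The linear combination $\wcomb = \wpl \cdot n^{10} + \wside$ is therefore also logspace computable and polynomially bounded. I would then define $\wcomund$ on $G$ by setting $\wcomund(\{u,v\}) = \wcomb(u,v)$ for $u \in L$, $v \in R$, exactly as in the construction at the end of Section \ref{sec:wtfunc}.

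For the isolation bound, I would invoke Lemma \ref{lem:unimat}: inside every signature class $A_i$ of perfect matchings of $\vec{G}$, there is at most one minimum-weight perfect matching under $\wcomb$. Since a signature is a $2g$-bit string, there are exactly $2^{2g}$ classes, and every perfect matching of $\vec{G}$ lies in one of them, so $\vec{G}$ has at most $2^{2g}$ minimum weight perfect matchings with respect to $\wcomb$. The orientation step produces a weight-preserving bijection between perfect matchings of $G$ and perfect matchings of $\vec{G}$ (by definition of $\wcomund$), so this bound transfers directly to $G$ under $\wcomund$.

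I do not expect any genuine obstacle in this proof, since it is essentially bookkeeping on top of earlier results. The only subtlety worth flagging is the use of Reingold's algorithm to recover the bipartition in logspace---this is the precise point where the bipartiteness hypothesis on $G$ enters the construction of $\wcomund$, and without it we could not canonically convert $\wcomb$ on $\vec{G}$ into a weight function on $G$. Everything else---the logspace computability of $\wpl$ and $\wside$, the polynomial boundedness, the class counting, and the preservation of matching weight under the orientation---follows immediately from material already in hand.
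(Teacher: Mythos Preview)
Your proposal is correct and follows essentially the same route as the paper: the lemma is stated there as an immediate consequence of Lemma~\ref{lem:unimat} together with the weight-preserving correspondence between matchings of $G$ and of $\vec G$ set up in Section~\ref{sec:wtfunc}, and that is precisely what you do. Your remarks on logspace computability (Reingold for the bipartition, \cite{TV12} for $\wpl$, the input ordering for $\wside$) and on where bipartiteness is used match the paper's account.
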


Now that given an undirected graph $G$ we have obtained at most $2^{2g}$ many minimum weight perfect matchings in $G$, we will use the following hashing scheme by  Fredman, Koml\'{o}s and Szemer\'{e}di \cite{FKS84} to isolate a minimum weight perfect matching among them. Let us first state their result in a form suitable to our purpose. 

\begin{theorem} 
\label{thm:fks}
\cite{FKS84}
Let $S = \{x_1, x_2, \dots, x_k\}$ be a set of $n$-bit integers. Then there exists a ${O}(\log n + \log k)$-bit prime number $p$ so that for all $x_i \neq x_j \in S$, $x_i \bmod{p} \neq x_j \bmod{p}$.
\end{theorem}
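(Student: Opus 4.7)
The plan is to prove the statement by a simple counting argument: a prime $p$ fails to keep the elements of $S$ distinct modulo $p$ precisely when $p$ divides the product of all pairwise differences, and that product has too few prime divisors to saturate a polynomially bounded range of candidate primes.

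First, I would form the auxiliary integer
\[
Q \;=\; \prod_{1 \le i < j \le k}(x_i - x_j),
\]
which is nonzero (assuming the $x_i$ are distinct, else the statement is vacuous for the equal pair) and satisfies $|Q| \le 2^{(n+1)\binom{k}{2}}$, so $\log_2 |Q| = O(n k^2)$. The key observation is that $x_i \equiv x_j \pmod{p}$ is equivalent to $p \mid (x_i - x_j)$; therefore a prime $p$ induces the desired separation on all of $S$ if and only if $p \nmid Q$. Since any positive integer $N$ has at most $\log_2 N$ distinct prime divisors, $Q$ has at most $O(n k^2)$ ``bad'' primes.

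Second, I would invoke a standard density estimate for the prime counting function $\pi$: a Chebyshev-type inequality $\pi(M) \ge c \cdot M/\ln M$ for some absolute constant $c>0$ is sufficient, so one need not appeal to the full prime number theorem. Choosing $M = C \cdot n k^2 \log(nk)$ with a sufficiently large constant $C$ gives $\pi(M)$ strictly larger than the number of prime divisors of $Q$, so some prime $p \le M$ fails to divide $Q$. Any such $p$ separates every pair, and its bit length is $\log_2 M = O(\log n + \log k)$, exactly as claimed.

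I do not expect a real obstacle. The only ingredient that is not wholly elementary is the lower bound on $\pi(M)$, but the most basic Chebyshev estimate is enough, and the remainder of the argument is just carefully balancing the size of $M$ against the $O(n k^2)$-bit bound on $\log_2 |Q|$. Hence the hard step, if any, is purely bookkeeping in the constants so that the prime we extract really does fit in $O(\log n + \log k)$ bits.
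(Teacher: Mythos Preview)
Your argument is correct and is the standard counting proof of this hashing lemma: bound the number of ``bad'' primes by the number of prime divisors of $\prod_{i<j}(x_i-x_j)$, which is $O(nk^2)$, and then use a Chebyshev-type lower bound on $\pi(M)$ to guarantee a good prime below $M = O(nk^2\log(nk))$, hence of bit length $O(\log n + \log k)$.

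There is, however, nothing to compare against in the paper: Theorem~\ref{thm:fks} is stated there as a black box imported from \cite{FKS84} and is not proved. The paper simply invokes it to separate the at most $2^{2g}$ surviving minimum-weight matchings. So your write-up is not an alternative to the paper's proof but rather a self-contained justification of a result the paper chose to quote. If you want to include it, it stands on its own; just be aware that the original reference \cite{FKS84} phrases things in terms of perfect hashing rather than pairwise differences, though the underlying arithmetic is the same.
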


Let $\mathcal{M}$ be the set of minimum weight perfect matchings in $G$ with respect to $\wcomund$. Assume edges of the graph $G$ are indexed as $e_1,e_2, \ldots, e_m$. Let $w_b$ be a weight function that assigns weight $2^i$ to the edge $e_i$.  This is an $m$-bit weight function, where $m \leq n^2$. All matchings in $G$ get different weight with respect to this weight function therefore, any two matchings $M_1,M_2 \in \mathcal{M}$, $w_b(M_1) \neq w_b(M_2)$. Also, note that $|\mathcal{M}| \leq 2^{2g}$, because each class has at most one minimum weight perfect matching. Thus by Theorem \ref{thm:fks} there exists an $O(\log n+g)$-bit prime $p$ such that with respect to weight function $\wfks \coloneqq$ $w_b$ mod $p$, every matching in $\mathcal{M}$ gets a different weight. Hence our final min-isolating weight function for $G$ will be,
\begin{eqnarray*}
w_p \coloneqq \wcomund \cdot n^{10} + \wfks,
\end{eqnarray*} 

Note that for every  ${O}(\log n + g)$-bit prime $p$ we get a corresponding weight function $w_p$ and by Theorem \ref{thm:fks} we know that there will be at least one ${O}(\log n + g)$-bit prime $p_1$ such that $w_{p_1}$ isolates a minimum weight perfect matching in $G$. Thus we can conclude the following.

\begin{theorem}
\label{thm:upm}
Given a $g$-genus graph along with its polygonal schema, we can construct weight functions $w_1, w_2, \ldots ,w_k$ in $O (\log n + g)$ space such that if graph has a perfect matching then for some $i \in [k]$ and, G has a unique perfect matching $M$ of weight $j$ with respect to weight function $w_i$, where $j,k = O(n^c + 2^{g})$ for some constant $c>0$. 
\end{theorem}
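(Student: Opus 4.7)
The plan is to build directly on the preceding lemma, which already supplies a logspace-computable weight function $\wcomund$ under which the set $\mathcal{M}$ of minimum weight perfect matchings of $G$ has size at most $2^{2g}$. The remaining task is therefore just to append a polynomially bounded tie-breaking weight that separates the members of $\mathcal{M}$, and to combine the two so that the final weight attains its minimum on a single PM.

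First, I would fix an arbitrary indexing $e_1,\dots,e_m$ of the edges of $G$ and define the bit-separating weight $w_b(e_i) := 2^i$. By uniqueness of binary representations, distinct edge subsets — and hence distinct perfect matchings — receive distinct $w_b$-weights. The only defect of $w_b$ is its $\Omega(n^2)$ bit length, so the next move is to compress it via the Fredman--Koml\'os--Szemer\'edi hashing scheme (Theorem \ref{thm:fks}): applied to the at most $2^{2g}$ integers $\{w_b(M) : M \in \mathcal{M}\}$, it yields an $O(\log n + g)$-bit prime $p$ for which the residues $w_b(M) \bmod p$ are still pairwise distinct on $\mathcal{M}$. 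Setting $\wfks(e_i) := 2^i \bmod p$ and combining as
\[ w_p := \wcomund \cdot N + \wfks, \]
where $N$ is a polynomial scaling factor large enough to dominate the total $\wfks$-weight of any PM, forces the $w_p$-minimum PMs to lie inside $\mathcal{M}$ (lexicographic domination), and within $\mathcal{M}$ the $\wfks$-values are injective by choice of $p$; hence $w_p$ is min-isolating.

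Since the correct prime $p$ cannot be identified within logspace, I would instead enumerate all $O(\log n + g)$-bit primes, obtaining $k = 2^{O(\log n + g)} = O(n^c + 2^g)$ candidate weight functions $w_1,\dots,w_k$; Theorem \ref{thm:fks} guarantees at least one of them is min-isolating. Each $w_i$ is computable in $O(\log n + g)$ space because $\wcomund$ is (by the preceding lemma) and the per-edge residues $2^i \bmod p$ can be computed by repeated squaring in $O(\log n + g)$ workspace. The bound $j = O(n^c + 2^g)$ on the PM-weights falls out from the sizes of the two constituent functions.

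There is no real obstacle beyond the bookkeeping already established in the previous section; the only point of care is arithmetic, namely choosing $N$ large enough that $N > m \cdot p$ so that the scaling genuinely implements the intended lexicographic ordering between $\wcomund$ and $\wfks$. This is routine, but it is the place where the exponent $c$ in the bound $O(n^c + 2^g)$ has to be chosen with some slack, particularly when $g$ is allowed to grow beyond $\log n$.
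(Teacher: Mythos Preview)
Your proposal is correct and follows essentially the same route as the paper: the paper also takes the preceding lemma's $\wcomund$, defines $w_b(e_i)=2^i$, applies the Fredman--Koml\'os--Szemer\'edi hashing (Theorem~\ref{thm:fks}) to obtain $\wfks = w_b \bmod p$, sets $w_p = \wcomund \cdot n^{10} + \wfks$, and then enumerates all $O(\log n + g)$-bit primes to produce the family $w_1,\dots,w_k$. If anything, you are more explicit than the paper about the lexicographic domination argument, the per-edge computation of $2^i \bmod p$, and the need to size the scaling factor $N$ against $m\cdot p$ rather than a fixed polynomial when $g$ is not bounded by $O(\log n)$.
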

For a graph of genus $g=O(\log n)$ we get polynomially many weight functions $w_1,w_2, \ldots w_t$ where $t = O(n^c)$ for some constant c, such that each $w_i$ is polynomially bounded and there is a unique minimum weight perfect matching in graph with respect to at least one $w_i$ if $G$ has a perfect matching. Then we apply the algorithm given in \cite{ARZ99} to get an $\SPL$ algorithm for perfect matching in $O(\log n)$ genus bipartite graphs. This finishes the proof of Theorem \ref{thm:main}.



\bibliography{references}

\appendix

\end{document}